\newtheorem{thm}{Theorem}
\newtheorem{cor}{Corollary}
\newtheorem{Lemma}{Lemma}[section]
\newtheorem{Remark}{Remark}[section]
\newtheorem{Assumption}{Assumption}[section]
\newcommand{\mc}{\mathcal}
\DeclareMathOperator*{\argmax}{argmax}
\newcommand{\tree}{\mathbb{T}}
\newcommand{\leaves}{\mathscr{L}}
\newcommand{\h}{\hspace*{.24in}}
\title{Convergence of maximum likelihood supertree reconstruction}
\date{}
\author{
Lam Si Tung Ho\thanks{These authors contributed equally to this work.} \\
Department of Mathematics and Statistics \\
Dalhousie University, Halifax, Nova Scotia, Canada
\and
Vu Dinh$^{*}$ \\
Department of Mathematical Sciences \\
University of Delaware
}
\begin{document}
\maketitle

\clearpage

\begin{abstract}
Supertree methods are tree reconstruction techniques that combine several smaller gene trees (possibly on different sets of species) to build a larger species tree.
The question of interest is whether the reconstructed supertree converges to the true species tree as the number of gene trees increases (that is, the consistency of supertree methods).
In this paper, we are particularly interested in the convergence rate of the maximum likelihood supertree.
Previous studies on the maximum likelihood supertree approach often formulate the question of interest as a discrete problem and focus on reconstructing the correct topology of the species tree. 
Aiming to reconstruct both the topology and the branch lengths of the species tree, we propose an analytic approach for analyzing the convergence of the maximum likelihood supertree method.
Specifically, we consider each tree as one point of a metric space and prove that the distance between the maximum likelihood supertree and the species tree converges to zero at a polynomial rate under some mild conditions.
We further verify these conditions for the popular exponential error model of gene trees.
\end{abstract}

\section{Introduction}

High-throughput sequencing is making large collections of sequences available to researchers at a low cost.
These genomic data represent a broad spectrum of life and motivates studies of the problem of reconstructing large phylogenetic trees using statistical methods.
Those data, however, also come from different sources, cover different genomic regions on which the evolutionary processes happen very differently, and may not be collected on the same set of species. 
Thus, combining trees on different, overlapping sets of species into a ``supertree" has become a popular approach for reconstructing large species trees.
Over the years, several methods of supertree reconstruction have been developed \citep{cotton2007majority}, and analyses of supertrees continue to play more and more important roles in the search for answers of many fundamental evolutionary questions.

While supertree methods have been of great interest in phylogenetics, little is known about their theoretical properties, especially in non-asymptotic settings. 
In cases when the individual trees are gene trees and the set of taxa are the same across the input trees, several statistical consistent methods have been derived \citep{ mossel2008incomplete, heled2009bayesian, kubatko2009stem, larget2010bucky, liu2010maximum, liu2011estimating, bryant2012inferring, mirarab2014astral, chifman2014quartet, vachaspati2015astrid}. 
However, most proofs of statistical consistency have been analyzed under the condition that each of the gene trees can be estimated accurately, and do not provide a guarantee of robustness to gene tree estimation errors  \citep{roch2015robustness}.
This is a critical concern because when the species tree is constructed using sequences taken from large genomic regions, those regions have a high chance of involving some recombination, which violates one of the main assumptions of the multi-species coalescent model.
On the other hand, limiting analyses to short regions increase gene tree estimation error (GTEE) and various summary methods had impaired accuracy when the error was high \citep{gatesy2014phylogenetic, mirarab2014statistical}.
It is later proved in \cite{roch2019long} that when the sequence length of each locus is bounded and the gene tree cannot be estimated reliably, most summary methods that estimate the species tree by combining gene trees are not statistically consistent.

Because some coalescent-based summary methods sometimes produce less accurate estimates than concatenation \citep{bayzid2013naive, patel2013error, mirarab2014statistical}, seemingly as a result of GTEE, the question of whether provable guarantees can be established in the presence of GTEE (for both species trees and supertree) naturally arises. 
In the context of species tree estimation, \cite{roch2015robustness} establish statistical consistency of the Rooted Triplet Consensus method and the Maximum Pseudo-likelihood for Estimating Species Trees method \citep{liu2010maximum} under GTEE and provide bounds on the sampling complexity for these methods to construct the correct species tree with high probability. 

The maximum likelihood (ML) supertree method is proposed by \cite{steel2008maximum} based on a probability model that permits ``errors" in gene tree topologies and allows the species tree to be estimated even if there is topological conflict amongst gene trees.
\cite{steel2008maximum} shows that ML estimate of the species tree is topologically consistent under fairly general conditions and also shows that the method of Matrix Representation with Parsimony \citep{baum1992combining} may be inconsistent under these same conditions.
However, while the ML estimate is topologically consistent, no results on the convergence rate of the estimator are obtained. 
In this paper, we propose a new analytic approach to study the convergence of the ML supertree method. 
Based on embedding trees into a metric space, we establish the conditions for which the convergence rate of the ML supertree can be obtained. 
We verify these conditions for the popular exponential error model of gene trees, thus obtain a polynomial convergence rate for the ML supertree to the true species tree under this model.

\section{Mathematical framework}

In this paper, the term \emph{phylogenetic tree} refers to a tree $\tree$ with leaves labeled by a set of species.
Each branch of $\tree$ is associated with a non-negative branch length.
A tree is said to be resolved if it is bifurcating and all branch lengths are positive.
Given an unrooted phylogenetic tree $\tree$ on a finite set $\leaves$ of species, any subset $\leaves'$ of $\leaves$ induces a phylogenetic tree on $\leaves'$, denoted $\tree |_{\leaves'}$, which is the subtree of $\tree$ that connects the species in $\leaves'$ only.

We regard the tree space as a metric space $(\mathcal{T}, d)$ where $\mathcal{T}$ is the set of all phylogenetic trees with branch lengths bounded from above by a positive constant $g_0$ and $d$ is a continuous metric. 
For simplicity of presentation, we will assume that $d$ is the BHV distance on the set of trees with $n$ species \citep{billera2001geometry}, but note that the analysis of the paper can be extended to any continuous and locally-Euclidean distance, including the branch-score distance \citep{kuhner1994simulation} and the AGPS distance \citep{amenta2007approximating}.

To describe trees that are ``near" to each other, the BHV distance use the class of nearest neighbor interchange (NNI) moves \citep{robinson1971comparison}.
An NNI move is defined as a transformation that collapses an interior branch to zero and then expands the resulting degree 4 vertex into a branch in a different way.
The BHV space models the set of trees $\mathcal{T}$ on $n$ species as a cubical complex consisting of a collection of orthants, each isomorphic to $\mathbb{R}_{\ge 0}^{2n-3}$. 
Each orthant of $\mathcal{T}$ corresponds uniquely to a tree topology, and the coordinates in each orthant parameterize the branch lengths for the corresponding tree.
The adjacent orthants of the complex with the same dimension correspond to NNI-adjacent trees.

The BHV space is equipped with a natural metric distance: the shortest path lying in the BHV space between the points. If two points lie in the same orthant, this distance is the usual Euclidean distance.
If two points are in different orthants, they can be joined by a sequence of straight segments, with each segment lying in a single orthant.
We can then measure the length of the path by adding up the lengths of the segments.
The distance between the two trees $\tree_1$ and $\tree_2$ on the BHV space is defined as the minimum of the lengths of such segmented paths joining the two points.

Throughout the paper, we assume that the evolution of the species has not involved reticulate processes, and there exists an underlying ``true species tree" in $\mc{T}$, denoted by $\tree^*$.
Furthermore, $\tree^*$ is a resolved tree with leaf set $\leaves^*$.
In a supertree reconstruction problem, we observe a sequence of gene trees $(\tree_1, \tree_2, \ldots, \tree_k)$, where $\tree_i$ has leaf set $\leaves_i$, and wish to combine these trees into a phylogenetic tree $\hat \tree_k$ on the union of the leaf sets 
\[
\hat \leaves_k = \cup_{i=1}^k \leaves_i \subset \leaves^*.
\] 

In this paper, we consider a tree-generating probability model $\left( \mathcal{P}_{\tree} \right )_{\tree \in \mc{T}}$ such that for each set of species $\leaves \subset \leaves^*$, $\mathcal{P}^{\leaves}_{\tree}(\cdot)$ is a distribution of trees forming by species in $\leaves$.
We assume that the observed gene trees $\{\tree_i\}_{i=1}^k$ are independently distributed according to $\left \{ \mathcal{P}^{\leaves_i}_{\tree^*}(\cdot) \right \}_{i=1}^k$ respectively.
In other words, the joint probability of the observed gene trees is
\[
\mathcal{P}_{\tree^*}(\tree_1, \tree_2, \ldots, \tree_k) = \prod_{i=1}^k{ \mathcal{P}^{\leaves_i}_{\tree^*}(\tree_i) }.
\]
\begin{Remark}
We note that the tree-generating probability model $\mathcal{P}_{\tree^*}$ may not necessarily be nested. 
That is, given two nested sets of leaves $\leaves_1 \subset \leaves_2 \subset \leaves^*$, although the probability $\mathcal{P}^{\leaves_2}_{\tree^*}$ (which is used to generate trees with leaf set $\leaves_2$) also induces a natural distribution on the set of trees with leaf set $\leaves_1$, this probability $\mathcal{P}^{\leaves_2}_{\tree^*} |_{\leaves_1}$ may not be the same as $\mathcal{P}^{\leaves_1}_{\tree^*}$.
\end{Remark}

Given the observed gene trees $(\tree_1, \tree_2, \ldots, \tree_k)$, the ML supertree is defined as
\[
\hat \tree_k = \argmax_{\tree \in \mc{T}}  \ell_k(\tree) 
\]
where $\ell_k(\tree) $ denotes the log-likelihood function
\[
\ell_k(\tree) = \sum_{i=1}^k \log \mathcal{P}^{\leaves_i}_{\tree}(\tree_i).
\]

To enable theoretical analyses of the ML supertree method, we make the following assumptions.

\begin{Assumption}[Weak covering property]
 The sequence of subsets of $\leaves^*$ satisfies the \emph{weak covering property}: there exists $c_1>0, \gamma>1/2$, and $K > 0$ such that for each subset $\leaves$ of taxa from $\leaves^*$ of size $4$,
\[
\frac{1}{k^\gamma}|\{i \le k: \leaves \subset \leaves_i\}| \ge c_1 \h \forall k \geq K.
\]
Here, $|\mc{A}|$ denotes the number of elements in the set $\mc{A}$.
\label{as:cover}
\end{Assumption}

We note that the weak covering property ensures that as the number of gene trees increases, all quartets (subtrees with $4$ leaves) of $\tree^*$ are visited with enough frequency to enable a reliable estimate for each of the quartets. 
Assumption $\ref{as:cover}$ is a direct generalization of the \emph{covering property} introduced in \cite{steel2008maximum}, which can be obtained from Assumption $\ref{as:cover}$ by setting $\gamma=1$. 

\begin{Assumption}[Model identifiability]
For all $\leaves \subset \leaves^*$, the distribution $\mathcal{P}^{\leaves}_{\tree^*}(\cdot)$ is identifiable.
That is, if $d(\tree^* |_{\leaves}, \tree |_{\leaves}) >  0$, then $\text{KL}(\mathcal{P}^{\leaves}_{\tree^*}, \mathcal{P}^{\leaves}_{\tree}) > 0$. 
\label{as:iden}
\end{Assumption}

Assumption $\ref{as:iden}$ guarantees that it is at least possible to reconstruct the restriction $\tree^*|_{\leaves}$ of $\tree^*$ to the subset $\leaves$ with a complete knowledge of $\mathcal{P}^{\leaves}_{\tree^*}$. 
This assumption is similar to, but distinct from, the condition of \emph{basic centrality} introduced by \cite{steel2008maximum}, which requires that for all subsets of  $\leaves \subset \leaves^*$  of size 4, 
\[
\mathcal{P}_{\tree^*}^{\leaves}[\tree^* |_\leaves] \ge (1+\eta) \mathcal{P}_{\tree^*}^{\leaves}[\tree'] 
\]
for all trees $\tree'$ on leaf set $\leaves$ that are different from $\tree^* |_\leaves$, and where $\eta>0$:
\begin{itemize}
\item On one hand, the basic centrality condition implies that if $d(\tree^* |_{\leaves}, \tree |_{\leaves}) >  0$, the modes of the distributions $\mathcal{P}^{\leaves}_{\tree^*}$ and $\mathcal{P}^{\leaves}_{\tree}$ are different. 
From this, we can deduce that $\text{KL}(\mathcal{P}^{\leaves}_{\tree^*}, \mathcal{P}^{\leaves}_{\tree}) > 0$. 
Thus, our identifiability assumption is somewhat weaker than this condition.
Assumption $\ref{as:iden}$ also does not impose any assumption on the family of distribution themselves and thus can be applied to a wider class of probabilistic models. 

\item On the other hand, the basic centrality condition only concerns leaf sets of size 4, while Assumption $\ref{as:iden}$ impose restrictions on leaf sets of all sizes. 
However, we note that conditions on subset of leaves (such as the basic centrality condition) only work under the implicit assumption that there are some connection between $\mathcal{P}^{\leaves_2}_{\tree}|_{\leaves_1}$ and  $\mathcal{P}^{\leaves_1}_{\tree}$ for $\leaves_1 \subset \leaves_2$.
Since our framework does not assume any nested structure in the probability model, Assumption $\ref{as:iden}$ is more appropriate. 
\end{itemize}

Finally, we impose the following regularity conditions on the tree-generating probability model:

\begin{Assumption}
(Regularity)

\begin{itemize}

\item[(a)] For all $\tree \in \mc{T}$, $\leaves \subset \leaves^*$, and any tree $\tree'$ with leaf set $\leaves$, 
\begin{itemize}
\item[i. ] $\mathcal{P}^{\leaves}_{\tree}(\tree') > 0$,
\item[ii. ] $\log \mathcal{P}^{\leaves}_{\tree}(\tree')$ is a locally-Lipchitz function with respect to $\tree$, and the Lipchitz constant does not depend on $\tree'$.
\end{itemize}

\item[(b)] There exist $c_2, c_3 > 0$, $m \ge 2$ such that for any leaf set $\leaves$ and tree $\tree$, if $d(\tree^*, \tree) \le c_2$, then 
\[
\text{KL}(\mathcal{P}^{\leaves}_{\tree^*}, \mathcal{P}^{\leaves}_{\tree}) \ge c_3 d(\tree^* |_{\leaves}, \tree |_{\leaves})^m.
\] 
\end{itemize}
\label{as:regular}
\end{Assumption}

\begin{Remark}
If for each tree $\tree'$ with leaf set $\leaves$, the probability density function $\mathcal{P}^{\leaves}_{\tree}(\tree')$ is an analytic function with respect to $\tree$ in a neighborhood of $\tree^*$, then Assumption $\ref{as:regular}$(b) holds.
\label{rem:smooth}
\end{Remark}

\begin{proof}
Note that the function $\mathbb{E}_{\mathcal{P}_{\tree^*}}\left [\log \mathcal{P}^{\leaves}_{\tree}(\tree')\right]$ (with respect to $\tree$) is analytic in a neighborhood $\mc{U}$ of $\tree^*$. 
Define 
\[
\mc{A} = \{ \tree \in \mc{U}:  \mathbb{E}_{\tree' \sim \mathcal{P}_{\tree^*}}\left [\log \mathcal{P}^{\leaves}_{\tree^*}(\tree')\right] = \mathbb{E}_{\tree' \sim \mathcal{P}_{\tree^*}}\left [\log \mathcal{P}^{\leaves}_{\tree}(\tree')\right] \}.
\]
For all $\tree \in \mc{A}$, we have 
\[
\text{KL}(\mathcal{P}^{\leaves}_{\tree^*}, \mathcal{P}^{\leaves}_{\tree}) = \mathbb{E}_{\tree' \sim \mathcal{P}_{\tree^*}}\left [\log \mathcal{P}^{\leaves}_{\tree^*}(\tree')\right] - \mathbb{E}_{\tree' \sim \mathcal{P}_{\tree^*}}\left [\log \mathcal{P}^{\leaves}_{\tree}(\tree')\right] = 0.
\]
By Assumption \ref{as:iden}, we conclude that $\tree |_\leaves = \tree^* |_\leaves$ for all $\tree \in \mc{A}$.
Applying {\L}ojasiewicz inequality \citep[][Theorem 1]{ji1992global}, we deduce that there exists $c_3 >0$ and $m \ge 2$ such that for any $\tree$ in the neighborhood,
\[
\mathbb{E}_{\mathcal{P}_{\tree^*}}\left [\log \mathcal{P}^{\leaves}_{\tree^*}(\tree')\right] - \mathbb{E}_{\mathcal{P}_{\tree^*}}\left [\log \mathcal{P}^{\leaves}_{\tree}(\tree')\right] \ge  c_3 d(\tree, \mc{A})^m \geq c_3 d(\tree^* |_{\leaves}, \tree |_{\leaves})^m.
\]
This implies the result.
\end{proof}

Throughout this paper, we will assume that Assumptions $\ref{as:cover}$, $\ref{as:iden}$, and $\ref{as:regular}$ hold.
In the next section, we will establish the following convergence rate of the ML supertree.

\begin{thm}
Under Assumptions $\ref{as:cover}$, $\ref{as:iden}$, and $\ref{as:regular}$, for any $\delta>0$, there exist constants $m>0$, $ C_{\delta} > 0$ and $K_\delta>0$ such that for all $k \ge K_\delta$, we have
\[
d(\hat \tree_k , \tree^*)\leq C_{\delta} \left( \frac{\log k}{k^{(\gamma-1/2)}}\right)^{1/m}
\]
with probability at least $1-\delta$.
\label{thm:consistency}
\end{thm}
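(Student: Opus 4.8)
The plan is to run the classical basic-inequality argument for $M$-estimators, but with the drift driven by the weak covering property rather than by a sum of $k$ identically distributed terms. Introduce the total expected divergence $D_k(\tree) := \sum_{i=1}^k \text{KL}\big(\mathcal{P}^{\leaves_i}_{\tree^*}, \mathcal{P}^{\leaves_i}_{\tree}\big)$ and the centered empirical process
\[
G_k(\tree) = \big[\ell_k(\tree) - \ell_k(\tree^*)\big] + D_k(\tree),
\]
so that $\ell_k(\tree)-\ell_k(\tree^*) = G_k(\tree) - D_k(\tree)$ and $\E{G_k(\tree)}=0$. Since $\hat\tree_k$ maximizes $\ell_k$, the inequality $\ell_k(\hat\tree_k)\ge \ell_k(\tree^*)$ immediately gives the basic bound $D_k(\hat\tree_k) \le G_k(\hat\tree_k) \le \sup_{\tree\in\mc{T}}|G_k(\tree)|$. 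The whole argument then reduces to (i) an upper bound on the fluctuation $\sup_\tree|G_k(\tree)|$ and (ii) a matching lower bound on the drift $D_k(\tree)$ in terms of $d(\tree,\tree^*)$.

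For (i), I would use Assumption~\ref{as:regular}(a): local Lipschitzness of $\log\mathcal{P}^{\leaves}_{\tree}(\tree')$ with a constant independent of $\tree'$, combined with compactness of $\mc{T}$ (branch lengths bounded by $g_0$, finitely many topologies), shows that each summand $\log\mathcal{P}^{\leaves_i}_{\tree}(\tree_i) - \log\mathcal{P}^{\leaves_i}_{\tree^*}(\tree_i)$ is uniformly bounded and $L$-Lipschitz in $\tree$ (it vanishes at $\tree=\tree^*$, so it is bounded by $L\,d(\tree,\tree^*)$). A Hoeffding bound at fixed $\tree$ gives $|G_k(\tree)|=O(\sqrt{k})$, and a union bound over a $1/k$-net of $\mc{T}$, whose cardinality is polynomial in $k$ because the BHV complex is a finite union of orthants of dimension $2n-3$, together with Lipschitz control of the oscillation between net points, upgrades this to
\[
\P{\sup_{\tree\in\mc{T}}|G_k(\tree)| \le C_\delta\sqrt{k\log k}} \ge 1-\delta \qquad \text{for all } k\ge K_\delta.
\]

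For (ii), which I expect to be the main obstacle, the covering property and the geometry of tree space enter. The key geometric lemma is a local-to-global comparison: there is $c_5>0$ such that every $\tree$ near $\tree^*$ admits a \emph{witnessing quartet}, a $4$-element $Q\subset\leaves^*$ with $d(\tree|_Q,\tree^*|_Q)\ge c_5\,d(\tree,\tree^*)$. Because the BHV space is locally Euclidean near the resolved tree $\tree^*$ and the restriction-to-$Q$ maps act linearly and injectively on branch lengths (quartets determine both the topology and, through the induced tree metric, the branch lengths), the combined restriction map is bounded below, yielding such a $Q$; I also need restriction to be non-expansive, so that $d(\tree|_{\leaves_i},\tree^*|_{\leaves_i})\ge d(\tree|_Q,\tree^*|_Q)$ whenever $Q\subset\leaves_i$. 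Granting this, Assumption~\ref{as:cover} supplies at least $c_1k^\gamma$ indices with $Q\subset\leaves_i$, and Assumption~\ref{as:regular}(b) lower-bounds each corresponding KL term, so that for $d(\tree,\tree^*)\le c_2$,
\[
D_k(\tree) \ge c_3\!\!\sum_{i:\,Q\subset\leaves_i}\!\! d(\tree|_{\leaves_i},\tree^*|_{\leaves_i})^m \ge c_3\,c_1\,k^\gamma\,(c_5\,d(\tree,\tree^*))^m.
\]

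Finally I would combine the two bounds after a localization step. For trees separated from $\tree^*$, Assumption~\ref{as:iden} and compactness give a uniform positive lower bound on the witnessing-quartet divergence, so $D_k$ is of order $k^\gamma$ there, which strictly exceeds the fluctuation $O(\sqrt{k\log k})$ \emph{precisely because} $\gamma>1/2$; hence the maximizer lands in $\{d(\cdot,\tree^*)\le c_2\}$ with high probability. On the event where the fluctuation bound and localization both hold, the basic inequality and the drift lower bound give
\[
c_3\,c_1\,c_5^m\,k^\gamma\,d(\hat\tree_k,\tree^*)^m \le D_k(\hat\tree_k) \le C_\delta\sqrt{k\log k},
\]
and solving for $d(\hat\tree_k,\tree^*)$ yields the claimed rate, the stated $\log k$ absorbing the $\sqrt{\log k}$. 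The condition $\gamma>1/2$ is exactly what makes the $k^\gamma$ drift dominate the $\sqrt{k}$-scale fluctuation, while the exponent $1/m$ comes from inverting the polynomial KL-to-distance bound of Assumption~\ref{as:regular}(b).
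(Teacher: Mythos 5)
Your proposal is correct and follows essentially the same route as the paper: your $D_k$ and $G_k$ are (up to sign and a factor of $k$) the paper's expected and centered empirical risk, your fluctuation bound is the paper's Hoeffding-plus-covering-net concentration lemma, your witnessing-quartet lemma is exactly the restriction inequality the paper cites from Dinh et al.\ (2018, Lemma 6.2(i)), and your localization step via identifiability plus compactness matches the paper's Lemmas on the expected risk outside a neighborhood of $\tree^*$. The final combination of the basic inequality with the $k^\gamma$ drift and the $\sqrt{k}\log k$ fluctuation is the paper's argument verbatim.
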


Theorem \ref{thm:consistency} establishes that under fairly mild regularity conditions, the ML supertree is consistent and has a polynomial convergence rate. 
Notably, the result holds for all identifiable family of locally-analytic distributions (Remark \ref{rem:smooth}). 
The degree of this (polynomial) convergence rate depends on the sampling scheme of the leaves (characterized by the covering coefficient $\gamma$) and on the geometry of the probabilistic model (characterized by the constant $m$ in Assumption $\ref{as:regular}$(b)). 
We further note that Assumption $\ref{as:regular}$(b)) is only required for establishing the convergence rate of the ML estimator and the absence of this condition does not affect the proof of consistency.

\begin{cor}
Under Assumptions $\ref{as:cover}$, $\ref{as:iden}$, and $\ref{as:regular}(a)$, the ML supertree is consistent.
\end{cor}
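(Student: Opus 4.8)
The plan is to re-run the argument behind Theorem \ref{thm:consistency} while replacing the quantitative lower bound of Assumption \ref{as:regular}(b) by the purely qualitative identifiability of Assumption \ref{as:iden}; the loss of (b) costs us the explicit rate but not convergence itself. Write the normalized objective $M_k(\tree) = \frac1k \ell_k(\tree)$ and its population counterpart $\bar M_k(\tree) = \frac1k\sum_{i=1}^k \E{\log \mathcal{P}^{\leaves_i}_{\tree}(\tree_i)}$, the expectation being under $\mathcal{P}^{\leaves_i}_{\tree^*}$. Since $\bar M_k(\tree^*) - \bar M_k(\tree) = \frac1k\sum_{i=1}^k \text{KL}(\mathcal{P}^{\leaves_i}_{\tree^*}, \mathcal{P}^{\leaves_i}_{\tree}) \ge 0$, the population objective is maximized at $\tree^*$, and the whole task is to transfer this to the empirical maximizer $\hat \tree_k$.

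First I would establish a uniform law of large numbers for the centered process $G_k(\tree) = [M_k(\tree^*) - M_k(\tree)] - [\bar M_k(\tree^*) - \bar M_k(\tree)]$. The key point is that, by Assumption \ref{as:regular}(a), each summand $\log \mathcal{P}^{\leaves_i}_{\tree^*}(\tree_i) - \log \mathcal{P}^{\leaves_i}_{\tree}(\tree_i)$ is bounded in absolute value by $L\, d(\tree^*, \tree) \le L\,\mathrm{diam}(\mc{T})$, where $L$ is the uniform Lipschitz constant of $\log \mathcal{P}^{\leaves}_{\tree}(\tree')$ in the variable $\tree$; here I use that $\mc{T}$ is compact, being a finite union of the cubes $[0,g_0]^{2n-3}$, so that local Lipschitzness upgrades to a global constant. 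Boundedness lets me apply Hoeffding's inequality at each point of a finite $\eta$-net of $\mc{T}$, while the same Lipschitz bound controls the oscillation of $G_k$ between net points; a union bound with $\eta \approx 1/k$ then yields $\sup_{\tree \in \mc{T}}|G_k(\tree)| = O_P(\sqrt{(\log k)/k})$. Because $\hat \tree_k$ maximizes $M_k$ we have $M_k(\tree^*) - M_k(\hat \tree_k) \le 0$, hence $\bar M_k(\tree^*) - \bar M_k(\hat \tree_k) \le -G_k(\hat \tree_k) \le \sup_{\tree}|G_k(\tree)|$, so the population gap at $\hat \tree_k$ vanishes in probability.

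The heart of the argument is a compactness-based separation lemma replacing Assumption \ref{as:regular}(b): for every $\epsilon > 0$ I claim $\inf_{\,d(\tree,\tree^*) \ge \epsilon} [\bar M_k(\tree^*) - \bar M_k(\tree)] \ge c_1 \Phi_\epsilon\, k^{\gamma - 1}$ for large $k$ and some $\Phi_\epsilon > 0$. Since a tree is determined by its quartet restrictions, $d(\tree,\tree^*) \ge \epsilon$ forces some size-$4$ set $Q$ with $d(\tree|_Q, \tree^*|_Q) > 0$; as restriction cannot create agreement where there was none, every $\leaves \supset Q$ then has $d(\tree|_{\leaves}, \tree^*|_{\leaves}) > 0$, so $\text{KL}(\mathcal{P}^{\leaves}_{\tree^*}, \mathcal{P}^{\leaves}_{\tree}) > 0$ by Assumption \ref{as:iden}. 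Setting $\Phi_Q(\tree) = \min_{Q \subset \leaves \subset \leaves^*} \text{KL}(\mathcal{P}^{\leaves}_{\tree^*}, \mathcal{P}^{\leaves}_{\tree})$ (a finite minimum of maps that are continuous in $\tree$ by Assumption \ref{as:regular}(a), since each equals the expectation of a function bounded and Lipschitz in $\tree$) and $\Phi(\tree) = \max_Q \Phi_Q(\tree)$, I obtain a continuous function that is strictly positive on the compact set $\{d(\tree,\tree^*) \ge \epsilon\}$, hence bounded below by some $\Phi_\epsilon > 0$. For each such $\tree$ the maximizing quartet $Q$ satisfies $\text{KL}(\mathcal{P}^{\leaves_i}_{\tree^*}, \mathcal{P}^{\leaves_i}_{\tree}) \ge \Phi_\epsilon$ for every $i$ with $Q \subset \leaves_i$, and Assumption \ref{as:cover} supplies at least $c_1 k^\gamma$ such indices, giving the claimed bound.

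Finally I would combine the two estimates. Because $\gamma > 1/2$, the separation rate $k^{\gamma - 1}$ dominates the fluctuation rate $\sqrt{(\log k)/k}$, since $k^{\gamma - 1/2}/\sqrt{\log k} \to \infty$; hence with probability tending to one $\sup_{\tree}|G_k(\tree)| < c_1 \Phi_\epsilon k^{\gamma - 1}$. On the event $\{d(\hat \tree_k, \tree^*) \ge \epsilon\}$ the separation lemma forces $\bar M_k(\tree^*) - \bar M_k(\hat \tree_k) \ge c_1 \Phi_\epsilon k^{\gamma - 1}$, contradicting $\bar M_k(\tree^*) - \bar M_k(\hat \tree_k) \le \sup_{\tree}|G_k(\tree)|$; therefore $\P{d(\hat \tree_k, \tree^*) \ge \epsilon} \to 0$, which is the asserted consistency. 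I expect the main obstacle to be the separation lemma: one must extract a uniform positive lower bound on the \emph{effective} Kullback--Leibler divergence across the varying, non-nested leaf sets using only qualitative identifiability, and the passage from a single differing quartet to positivity of all the full-leaf-set divergences is exactly where the compactness argument and Assumption \ref{as:iden} must be combined with care.
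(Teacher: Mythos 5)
Your proposal is correct and follows essentially the same route as the paper: your uniform law of large numbers is the paper's concentration bound (Lemma~\ref{lem:variance}, Hoeffding plus a covering net and the Lipschitz property), your compactness-based separation lemma is the paper's Lemmas~\ref{prelimc} and~\ref{prelimb} (quartet reduction, identifiability, and compactness yielding a $c_{\mc{V}}k^{\gamma}$ lower bound on the KL divergence outside any neighborhood of $\tree^*$), and the final comparison of the rates $k^{\gamma-1}$ versus $\sqrt{\log k/k}$ using $\gamma>1/2$ is exactly the consistency step in Section~3.3. The only differences are cosmetic, e.g.\ packaging the compactness argument as a positive minimum of a continuous function rather than a sequential contradiction.
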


\begin{Remark}
Theorem $\ref{thm:consistency}$ is still valid if the universal Assumption $\ref{as:regular}$(b) is replaced by the following condition:

\emph{
There exist $c_2, c_3 > 0$, $m \ge 2$ such that if $d(\tree^*, \tree) \le c_2$, then 
\[
\text{KL}(\mathcal{P}^{\leaves_i}_{\tree^*}, \mathcal{P}^{\leaves_i}_{\tree}) \ge c_3 d(\tree^* |_{\leaves_i}, \tree |_{\leaves_i})^m \h \forall i \in \mathbb{N}.
\] 
}
\label{rem:reg}
\end{Remark}

\section{Convergence of maximum likelihood supertree}

To enable the analysis of convergence, we define
\[
R_{k}(\tree) = \frac{1}{k}\ell_k(\tree^*) - \frac{1}{k}\ell_k(\tree) 
\]
and 
\[
\mc{R}_k(\tree) := E_{\mathcal{P}_{\tree^*}}[R_{k}(\tree)] =  \frac{1}{k} \left(E_{\mathcal{P}_{\tree^*}}[\ell_k(\tree^*)]  - E_{\mathcal{P}_{\tree^*}}[\ell_k(\tree)]  \right) = \frac{1}{k} \text{KL}(\mathcal{P}_{\tree^*}, \mathcal{P}_{\tree}).
\]
We refer to $R_{k}$ and $E_{\mathcal{P}_{\tree^*}}[R_{k}(\tree)]$ as the \emph{empirical risk function} and the \emph{expected risk function}, respectively. 

An intuitive argument for the consistency of the ML estimator can be described as follows. 
As the number of gene trees increases, we have
\[
\left |R_{k}(\tree)  - \mc{R}_k(\tree) \right|  \to 0 
\]
with sufficiently high probability. 
Therefore, the ML estimator will converge to the optimal value of the risk function, which is attained at the true species tree $\tree^*$.
This simple argument is formalized by a lower bound of the expected risk function (Section 3.1) and a uniform concentration bound on the deviation of the empirical risk function and its expectation (Section 3.2). 

\subsection{Lower bound of the expected risk}

\begin{Lemma}

There exist a neighborhood $\mc{U}$ of $\tree^*$ and $c_4>0$ such that 
\[
\mathbb{E}_{\mathcal{P}_{\tree^*}}\left [\ell_{k}(\tree^*) - \ell_{k}(\tree)  \right] \ge c_4 k^{\gamma} d(\tree^*, \tree)^m \h \forall \tree \in \mc{U}.
\]
\label{info}
\end{Lemma}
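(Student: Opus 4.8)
The plan is to decompose the expected log-likelihood gap as a sum over the observed gene trees and then bound the contribution of each gene tree from below using the regularity and covering assumptions. Since the gene trees are independent with $\tree_i \sim \mathcal{P}^{\leaves_i}_{\tree^*}$, linearity of expectation gives
\[
\mathbb{E}_{\mathcal{P}_{\tree^*}}\left[\ell_k(\tree^*) - \ell_k(\tree)\right] = \sum_{i=1}^k \mathbb{E}_{\tree_i \sim \mathcal{P}^{\leaves_i}_{\tree^*}}\left[\log \mathcal{P}^{\leaves_i}_{\tree^*}(\tree_i) - \log \mathcal{P}^{\leaves_i}_{\tree}(\tree_i)\right] = \sum_{i=1}^k \text{KL}\left(\mathcal{P}^{\leaves_i}_{\tree^*}, \mathcal{P}^{\leaves_i}_{\tree}\right).
\]
Each summand is nonnegative, so the first step is simply to discard all but the useful terms and then apply the regularity bound of Assumption \ref{as:regular}(b) to the survivors.

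First I would fix the neighborhood $\mc{U}$ to be contained in the ball $\{\tree : d(\tree^*, \tree) \le c_2\}$ from Assumption \ref{as:regular}(b), so that for every $i$ we have the pointwise lower bound $\text{KL}(\mathcal{P}^{\leaves_i}_{\tree^*}, \mathcal{P}^{\leaves_i}_{\tree}) \ge c_3\, d(\tree^*|_{\leaves_i}, \tree|_{\leaves_i})^m$ for all $\tree \in \mc{U}$. The main obstacle is now transparent: the per-gene-tree bound controls the \emph{restricted} distance $d(\tree^*|_{\leaves_i}, \tree|_{\leaves_i})$, whereas the conclusion is stated in terms of the full distance $d(\tree^*, \tree)$ on the whole leaf set $\leaves^*$. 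I must therefore show that summing these restricted distances over enough subsets recovers a constant multiple of $k^\gamma\, d(\tree^*, \tree)^m$. This is where the weak covering property enters, and where the real work lies.

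The key step is a geometric comparison between the full BHV distance and the quartet-restricted distances. Because $\tree^*$ is resolved and trees are embedded in BHV space (which is locally Euclidean near a fully-resolved point), the full distance $d(\tree^*, \tree)$ for $\tree$ near $\tree^*$ should be comparable to an aggregate of the branch-length discrepancies, and each such discrepancy is witnessed by the restriction to some quartet $\leaves$ of size $4$ containing the relevant branch. Concretely, I would argue that there exists a constant $c_5 > 0$ such that
\[
\max_{\leaves \subset \leaves^*,\, |\leaves|=4} d(\tree^*|_{\leaves}, \tree|_{\leaves})^m \ge c_5\, d(\tree^*, \tree)^m
\]
for $\tree$ in a sufficiently small neighborhood of $\tree^*$; that is, some quartet restriction sees a definite fraction of the total distance. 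Fixing such a ``worst-case'' quartet $\leaves$, Assumption \ref{as:cover} guarantees that at least $c_1 k^\gamma$ of the indices $i \le k$ satisfy $\leaves \subset \leaves_i$, and for each of those indices the monotonicity of restriction ($d(\tree^*|_{\leaves}, \tree|_{\leaves}) \le d(\tree^*|_{\leaves_i}, \tree|_{\leaves_i})$, since $\leaves \subset \leaves_i$) lets me replace the $\leaves_i$-restricted distance by the $\leaves$-restricted one from below.

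Chaining these inequalities yields, for $k \ge K$ and $\tree \in \mc{U}$,
\[
\mathbb{E}_{\mathcal{P}_{\tree^*}}\left[\ell_k(\tree^*) - \ell_k(\tree)\right] \ge \sum_{i:\, \leaves \subset \leaves_i} c_3\, d(\tree^*|_{\leaves_i}, \tree|_{\leaves_i})^m \ge c_3\, (c_1 k^\gamma)\, d(\tree^*|_{\leaves}, \tree|_{\leaves})^m \ge c_3 c_1 c_5\, k^\gamma\, d(\tree^*, \tree)^m,
\]
so the claim holds with $c_4 = c_1 c_3 c_5$ and $\mc{U}$ the intersection of the ball of radius $c_2$ with the neighborhood on which the quartet-comparison inequality is valid. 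The step I expect to be delicate is the quartet-comparison inequality itself: I would justify it by noting that any two distinct resolved topologies (or branch-length configurations) on $\leaves^*$ differ in at least one edge split, that this split is detectable on a four-taxon subset straddling the edge, and that near the resolved tree $\tree^*$ the BHV distance behaves like a Euclidean norm on branch lengths, so a uniform fraction of the total squared-length deviation is captured by a single quartet. Making the dependence of $c_5$ on the fixed topology of $\tree^*$ explicit, and confirming that finitely many quartets suffice (so the max is attained and $c_5$ is uniform over the neighborhood), is the technical heart of the argument.
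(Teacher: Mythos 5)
Your proposal is correct and follows essentially the same route as the paper: decompose the expected gap into a sum of per-gene-tree KL divergences, apply Assumption \ref{as:regular}(b) on a neighborhood of $\tree^*$, use the weak covering property to count the $c_1 k^\gamma$ gene trees containing a fixed quartet, and finish with a quartet-comparison inequality relating $d(\tree^*|_\leaves,\tree|_\leaves)$ to $d(\tree^*,\tree)$. The only difference is that the step you correctly flag as the technical heart --- the existence of a size-4 leaf set $\leaves$ with $d(\tree^*|_\leaves,\tree|_\leaves) \ge \frac{1}{2n-3}\, d(\tree^*,\tree)$ --- is not proved in the paper but cited from Lemma 6.2(i) of \cite{dinh2018consistency}, and your sketch of why it holds (a branch-length discrepancy must be witnessed by some quartet straddling the corresponding edge) is the right idea.
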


\begin{proof}

We note that
\[
\mathbb{E}_{\mathcal{P}_{\tree^*}}\left [\ell_{k}(\tree^*) - \ell_{k}(\tree)  \right]  = \text{KL}(\mathcal{P}_{\tree^*}, \mathcal{P}_{\tree}).
\]
Consider an arbitrary leaf set $\leaves \subset \leaves^*$ of size 4 and define $\mc{I}= \{i \le k: \leaves \subset \leaves_i\}$.
By the weak covering property, we have $|\mc{I}| \ge c_1 k^{\gamma}$.
Thus,
\[
 \text{KL}(\mathcal{P}_{\tree^*}, \mathcal{P}_{\tree})  \ge \sum_{i \in \mc{I}}{\text{KL}(\mathcal{P}^{\leaves_i}_{\tree^*}, \mathcal{P}^{\leaves_i}_{\tree})} \geq \sum_{i \in \mc{I}}{c_3 d(\tree^*|_{\leaves_i}, \tree|_{\leaves_i})^m} \ge c_1 c_3 k^{\gamma} (2n-3)^{m/2} d(\tree^*|_\leaves, \tree|_\leaves)^m
\]
for all $\tree$ in some neighborhood $U$ around $\tree^*$. 
Here, the second inequality comes from Assumption $\ref{as:regular}$(b).

On the other hand, \cite{dinh2018consistency} (Lemma 6.2 (i)) proved that for some leaf set $\leaves \subset \leaves^*$ of size 4, we have
\[
d(\tree^*|_\leaves, \tree|_\leaves) \ge \frac{1}{(2n-3)} d(\tree^*, \tree).
\]
We deduce that
\[
\text{KL}(\mathcal{P}_{\tree^*}, \mathcal{P}_{\tree}) \ge \frac{1}{(2n-3)^{m/2}} c_1 c_3 k^{\gamma} d(\tree^*, \tree)^m
\]
\end{proof}

\begin{Lemma}
For any leaf set $\leaves$ and any $x>0$, there exists $C_x>0$ such that
\[
 \text{KL}(\mathcal{P}^{\leaves}_{\tree^*}, \mathcal{P}^{\leaves}_{\tree}) \ge C_x 
\]
when $d(\tree^*|_\leaves, \tree|_\leaves) \ge x$.
\label{prelimc}
\end{Lemma}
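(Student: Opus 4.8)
The plan is to prove this by a compactness argument, using identifiability (Assumption \ref{as:iden}) for strict positivity and the regularity conditions (Assumption \ref{as:regular}(a)) for continuity. The crucial structural fact is that the tree space is \emph{compact}: since all branch lengths are bounded above by $g_0$ and there are only finitely many topologies on the finite leaf set $\leaves^*$, the space $\mathcal{T}$ is a finite union of truncated orthants each isomorphic to $[0,g_0]^{2n-3}$, hence compact. Everything then reduces to the statement that a continuous function which is strictly positive on a compact set attains a positive minimum.

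First I would show that $\tree \mapsto \text{KL}(\mathcal{P}^{\leaves}_{\tree^*}, \mathcal{P}^{\leaves}_{\tree})$ is continuous on $\mathcal{T}$. Writing
\[
\text{KL}(\mathcal{P}^{\leaves}_{\tree^*}, \mathcal{P}^{\leaves}_{\tree}) = \mathbb{E}_{\tree' \sim \mathcal{P}^{\leaves}_{\tree^*}}\left[\log \mathcal{P}^{\leaves}_{\tree^*}(\tree')\right] - \mathbb{E}_{\tree' \sim \mathcal{P}^{\leaves}_{\tree^*}}\left[\log \mathcal{P}^{\leaves}_{\tree}(\tree')\right],
\]
only the second term depends on $\tree$. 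By Assumption \ref{as:regular}(a)(ii), $\log \mathcal{P}^{\leaves}_{\tree}(\tree')$ is locally Lipschitz in $\tree$ with a Lipschitz constant independent of $\tree'$; moving this uniform bound through the expectation shows the second term is locally Lipschitz, hence continuous, in $\tree$. Covering the compact space $\mathcal{T}$ by finitely many such Lipschitz neighborhoods upgrades this to continuity on all of $\mathcal{T}$, and Assumption \ref{as:regular}(a)(i) ($\mathcal{P}^{\leaves}_{\tree}(\tree')>0$) guarantees the integrand is finite throughout.

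Next I would define the set $S_x = \{\tree \in \mathcal{T} : d(\tree^*|_\leaves, \tree|_\leaves) \ge x\}$. Since the restriction map $\tree \mapsto \tree|_\leaves$ and the metric $d$ are continuous, the map $\tree \mapsto d(\tree^*|_\leaves, \tree|_\leaves)$ is continuous, so $S_x$ is a closed subset of the compact space $\mathcal{T}$ and is therefore compact. The continuous function above attains its infimum on $S_x$ at some $\tree_0$, and I set $C_x := \text{KL}(\mathcal{P}^{\leaves}_{\tree^*}, \mathcal{P}^{\leaves}_{\tree_0})$. Because $d(\tree^*|_\leaves, \tree_0|_\leaves) \ge x > 0$, Assumption \ref{as:iden} yields $C_x > 0$, and every $\tree \in S_x$ satisfies $\text{KL}(\mathcal{P}^{\leaves}_{\tree^*}, \mathcal{P}^{\leaves}_{\tree}) \ge C_x$, as required. (If $S_x = \emptyset$ the claim is vacuous and any $C_x > 0$ works; note that since $\leaves$ is fixed, $C_x$ is allowed to depend on $\leaves$, so no uniformity over leaf sets is needed here.)

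The main obstacle will be the continuity step, specifically justifying that the uniform Lipschitz control on $\log \mathcal{P}^{\leaves}_{\tree}(\tree')$ can be transferred to the expectation $\mathbb{E}_{\tree' \sim \mathcal{P}^{\leaves}_{\tree^*}}[\log \mathcal{P}^{\leaves}_{\tree}(\tree')]$, which requires this expectation to be finite and its oscillation in $\tree$ to be controlled independently of $\tree'$. This is exactly what Assumption \ref{as:regular}(a) is designed to supply: part (i) keeps the logarithm finite and part (ii) provides the $\tree'$-independent Lipschitz bound. Once continuity is in hand, the remainder is the standard argument that a continuous function on a compact set attains its minimum, with the strict positivity of that minimum furnished pointwise by identifiability.
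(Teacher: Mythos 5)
Your proof is correct and rests on exactly the same ingredients as the paper's: compactness of the bounded-branch-length tree space, continuity of $\tree \mapsto \text{KL}(\mathcal{P}^{\leaves}_{\tree^*}, \mathcal{P}^{\leaves}_{\tree})$, and strict positivity from Assumption \ref{as:iden}. The paper packages this as a proof by contradiction via extracting a convergent subsequence, whereas you argue directly that the continuous KL functional attains a positive minimum on the compact set $S_x$; this is the same argument in contrapositive form, and your version has the minor virtue of making explicit the continuity step that the paper leaves implicit.
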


\begin{proof}
Assume that there exists a sequence of tree $\{\tree_i\}$ such that $d(\tree^*|_\leaves, \tree_i|_\leaves) \ge x$ and $ \text{KL}(\mathcal{P}^{\leaves}_{\tree^*}, \mathcal{P}^{\leaves}_{\tree_i}) \to 0$. 
Since the tree space is compact, we can extract a sub-sequence $\{\tree_{i_j}\}$ of $\{\tree_i\}$ that converges to some tree $\tree_0$. 
We deduce that $ \text{KL}(\mathcal{P}^{\leaves}_{\tree^*}, \mathcal{P}^{\leaves}_{\tree_0}) =0$ and $\tree^*|_\leaves \ne \tree_0|_\leaves$. 
This contradicts Assumption $\ref{as:iden}$.
\end{proof}

\begin{Lemma}
Let $\mc{V}$ be a neighborhood of $\tree^*$. 
There exist $c_{\mc{V}} > 0$ such that for any $\tree \not \in \mc{V}$, we have 
\[
\text{KL}(\mathcal{P}_{\tree^*}, \mathcal{P}_{\tree}) \ge c_{\mc{V}} k^{\gamma}.
\] 
\label{prelimb}
\end{Lemma}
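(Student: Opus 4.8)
The plan is to exploit the additivity of the Kullback--Leibler divergence across the independent gene trees together with the uniform constant lower bound of Lemma \ref{prelimc}, using the weak covering property to accumulate at least $c_1 k^{\gamma}$ such contributions. Since the joint law factorizes as a product, the divergence splits as
\[
\text{KL}(\mathcal{P}_{\tree^*}, \mathcal{P}_{\tree}) = \sum_{i=1}^k \text{KL}(\mathcal{P}^{\leaves_i}_{\tree^*}, \mathcal{P}^{\leaves_i}_{\tree}),
\]
and all summands are nonnegative, so it suffices to exhibit an index set of cardinality at least proportional to $k^{\gamma}$ on each member of which the corresponding term is bounded below by a fixed positive constant.

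First I would use that $\mc{V}$ is a neighborhood of $\tree^*$ to fix $r = r_{\mc{V}} > 0$ with the open ball of radius $r$ around $\tree^*$ contained in $\mc{V}$, so that $d(\tree^*, \tree) \ge r$ for every $\tree \not\in \mc{V}$. Applying Lemma 6.2(i) of \cite{dinh2018consistency}, for each such $\tree$ there is a size-$4$ leaf set $\leaves = \leaves(\tree) \subset \leaves^*$ with
\[
d(\tree^*|_\leaves, \tree|_\leaves) \ge \frac{1}{2n-3}\, d(\tree^*, \tree) \ge \frac{r}{2n-3} =: x > 0.
\]
The chosen quartet depends on $\tree$, but the resulting bound will be uniform. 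Next, for every index $i$ with $\leaves \subset \leaves_i$, the restriction map $\tree' \mapsto \tree'|_\leaves$ is non-expanding and satisfies $(\tree^*|_{\leaves_i})|_\leaves = \tree^*|_\leaves$ and $(\tree|_{\leaves_i})|_\leaves = \tree|_\leaves$, whence $d(\tree^*|_{\leaves_i}, \tree|_{\leaves_i}) \ge d(\tree^*|_\leaves, \tree|_\leaves) \ge x$. I then invoke Lemma \ref{prelimc} on the leaf set $\leaves_i$ with threshold $x$ to lower bound $\text{KL}(\mathcal{P}^{\leaves_i}_{\tree^*}, \mathcal{P}^{\leaves_i}_{\tree})$ by a positive constant. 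Because there are only finitely many subsets of $\leaves^*$, taking the minimum of these constants over all possible leaf sets yields a single $C = C_x > 0$, independent of $i$ and of $\tree$, such that $\text{KL}(\mathcal{P}^{\leaves_i}_{\tree^*}, \mathcal{P}^{\leaves_i}_{\tree}) \ge C$ for all such $i$.

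Finally, the weak covering property (Assumption \ref{as:cover}) guarantees that $|\{i \le k : \leaves \subset \leaves_i\}| \ge c_1 k^{\gamma}$, and this bound holds simultaneously for every quartet. Summing the factorized divergence over only this index set therefore gives
\[
\text{KL}(\mathcal{P}_{\tree^*}, \mathcal{P}_{\tree}) \ge C\, c_1\, k^{\gamma} =: c_{\mc{V}}\, k^{\gamma},
\]
as claimed.

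The crux of the argument --- and the point that forces this route rather than a direct appeal to Lemma \ref{prelimc} on the quartet itself --- is that the model is not nested (see the Remark following the definition of $\mathcal{P}_{\tree^*}$): the divergence $\text{KL}(\mathcal{P}^{\leaves}_{\tree^*}, \mathcal{P}^{\leaves}_{\tree})$ on the quartet $\leaves$ cannot be compared to $\text{KL}(\mathcal{P}^{\leaves_i}_{\tree^*}, \mathcal{P}^{\leaves_i}_{\tree})$ on the larger set $\leaves_i$. I therefore apply the constant lower bound directly on each $\leaves_i$, transferring the distance bound from the quartet via the non-expansiveness of restriction, and securing a uniform constant through finiteness of the collection of leaf sets. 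The only remaining delicacy is that the selected quartet depends on $\tree$; this is harmless because the covering bound $c_1 k^{\gamma}$ is uniform over all quartets, so $c_{\mc{V}}$ is genuinely independent of $\tree$.
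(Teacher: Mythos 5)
Your proof is correct and follows essentially the same route as the paper's: select a quartet via Lemma 6.2(i) of \cite{dinh2018consistency} on which the restricted distance is bounded below, transfer that bound to each $\leaves_i$ containing the quartet, invoke Lemma \ref{prelimc} for a constant lower bound on each such KL term, and accumulate $c_1 k^{\gamma}$ of them via the weak covering property. Your treatment of the uniformity of the constant (minimizing over the finitely many leaf sets) and your use of non-expansiveness of restriction are in fact slightly cleaner than the paper's corresponding step.
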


\begin{proof}
By \cite{dinh2018consistency} (Lemma 6.2 (i)), there exists $\leaves \subset \leaves^*$ of size 4 such that
\[
d(\tree^*|_\leaves, \tree|_\leaves) \ge \frac{1}{(2n-3)} d(\tree^*, \tree) \ge C_{\mc{V}} >0
\]
since $\tree \not \in \mc{V}$. 
Define $\mc{I}= \{i \le k: \leaves \subset \leaves_i\}$, we note that for all $i \in \mc{I}$
\[
d(\tree^*|_{\leaves_i}, \tree|_{\leaves_i}) \ge (2n-3)^{1/2} d(\tree^*|_\leaves, \tree|_\leaves) \ge C'_{\mc{V}},
\]
which implies $ \text{KL}(\mathcal{P}^{\leaves_i}_{\tree^*}, \mathcal{P}^{\leaves_i}_{\tree}) \ge C''_{\mc{V}}$ for some $ C''_{\mc{V}} >0$ by Lemma $\ref{prelimc}$. 

By the weak covering property, we have $|\mc{I}| \ge c_1 k^{\gamma}$.
Thus,
\[
 \text{KL}(\mathcal{P}_{\tree^*}, \mathcal{P}_{\tree})  \ge \sum_{i \in \mc{I}}{\text{KL}(\mathcal{P}^{\leaves_i}_{\tree^*}, \mathcal{P}^{\leaves_i}_{\tree})} \ge c_1 k^{\gamma} C''_{\mc{V}}
\]
which completes the proof.
\end{proof}

\subsection{Uniform concentration bound}

\begin{Lemma}[Concentration bound]
For any $\delta>0$, $k \ge 3$, there exists $c_5(\delta)$ such that
\[
\left |R_{k}(\tree)  - \mathbb{E}_{\mathcal{P}_{\tree^*}}\left [R_{k}(\tree) \right] \right| \le \frac{c_5 \log k} {\sqrt{k}} \h \forall \tree, 
\]
with probability at least $1-\delta$.
\label{lem:variance}
\end{Lemma}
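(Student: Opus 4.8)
The plan is to read the left-hand side as the supremum of an empirical process indexed by $\tree \in \mc{T}$ and to bound it by combining a pointwise concentration inequality with a covering argument over the compact tree space, using the Lipschitz regularity of the log-likelihood both to bound the summands and to discretize. First I would write, for each $i$,
\[
Z_i(\tree) = \log \mathcal{P}^{\leaves_i}_{\tree^*}(\tree_i) - \log \mathcal{P}^{\leaves_i}_{\tree}(\tree_i),
\]
so that $R_{k}(\tree) - \mathbb{E}_{\mathcal{P}_{\tree^*}}[R_{k}(\tree)] = \tfrac{1}{k}\sum_{i=1}^k \left(Z_i(\tree) - \mathbb{E}_{\mathcal{P}_{\tree^*}}[Z_i(\tree)]\right)$ is a normalized sum of independent, centered random variables. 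The first key step is a uniform control coming from Assumption \ref{as:regular}(a): the map $\tree \mapsto \log \mathcal{P}^{\leaves_i}_{\tree}(\tree_i)$ is locally-Lipschitz with a constant that does not depend on the observed tree $\tree_i$. Since $\mc{T}$ is compact (a finite union of truncated orthants), a covering argument produces a single constant $L$, independent of $i$ and of the realization $\tree_i$, that serves simultaneously as a bound on the variation of these maps across $\mc{T}$ and as a Lipschitz constant for nearby trees. In particular $|Z_i(\tree)| \le B := L\,\mathrm{diam}(\mc{T})$ for all $i$, all $\tree$, and all realizations.

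With $Z_i(\tree) \in [-B, B]$, for each fixed $\tree$ Hoeffding's inequality gives
\[
\mathbb{P}\left( \left| R_{k}(\tree) - \mathbb{E}_{\mathcal{P}_{\tree^*}}[R_{k}(\tree)] \right| > t \right) \le 2\exp\left(-\frac{k\, t^2}{2B^2}\right).
\]
To pass to a uniform bound I would fix a scale $\varepsilon_k = 1/\sqrt{k}$ and take a minimal $\varepsilon_k$-net $\{\tree^{(1)}, \dots, \tree^{(N_k)}\}$ of $\mc{T}$; since $\mc{T}$ has dimension $2n-3$ and finitely many orthants, its covering number satisfies $N_k \le C_0\, \varepsilon_k^{-(2n-3)} = C_0\, k^{(2n-3)/2}$, i.e. it is polynomial in $k$. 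A union bound over the net then yields
\[
\mathbb{P}\left( \max_{1 \le j \le N_k} \left| R_{k}(\tree^{(j)}) - \mathbb{E}_{\mathcal{P}_{\tree^*}}[R_{k}(\tree^{(j)})] \right| > t \right) \le 2 N_k \exp\left(-\frac{k\, t^2}{2B^2}\right).
\]
For an arbitrary $\tree$, picking the nearest net point $\tree^{(j)}$ and using the Lipschitz bound on the increments (the $\log\mathcal{P}^{\leaves_i}_{\tree^*}(\tree_i)$ terms cancel in $Z_i(\tree) - Z_i(\tree^{(j)})$), both $|R_{k}(\tree) - R_{k}(\tree^{(j)})|$ and $|\mathbb{E}_{\mathcal{P}_{\tree^*}}[R_{k}(\tree)] - \mathbb{E}_{\mathcal{P}_{\tree^*}}[R_{k}(\tree^{(j)})]|$ are at most $L\varepsilon_k$, so
\[
\left| R_{k}(\tree) - \mathbb{E}_{\mathcal{P}_{\tree^*}}[R_{k}(\tree)] \right| \le \max_{j} \left| R_{k}(\tree^{(j)}) - \mathbb{E}_{\mathcal{P}_{\tree^*}}[R_{k}(\tree^{(j)})] \right| + 2L\varepsilon_k .
\]

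Finally I would calibrate the threshold as $t = a (\log k)/\sqrt{k}$, for which the tail factor becomes $\exp(-a^2(\log k)^2/(2B^2))$, a quantity that decays faster than any power of $k$. This is the crux: because the decay is super-polynomial in $k$ while the covering number $N_k$ is only polynomial, the product $2N_k\exp(-a^2(\log k)^2/(2B^2))$ can be made $\le \delta$ by choosing $a = a(\delta)$ large, and since $2L\varepsilon_k = 2L/\sqrt{k} \le 2L(\log k)/\sqrt{k}$ for $k \ge 3$, the discretization error is absorbed into the same rate, giving the claim with $c_5(\delta) = a(\delta) + 2L$. The main obstacle I anticipate is precisely this balancing act: the net must be fine enough that the Lipschitz discretization error is $O((\log k)/\sqrt{k})$, yet the union-bound inflation must stay dominated by the tail decay, and a single constant $a(\delta)$ must achieve the target failure probability for \emph{every} $k \ge 3$ rather than merely asymptotically. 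The latter is handled by observing that the exponent $\tfrac{a^2}{2B^2}(\log k)^2 - \tfrac{2n-3}{2}\log k$ is increasing in $k$ once $a$ is large, so its minimum over $k \ge 3$ is attained at $k = 3$ and can be forced above $\log(2C_0/\delta)$ by enlarging $a$.
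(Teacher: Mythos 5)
Your proposal is correct and follows essentially the same route as the paper's proof: a pointwise Hoeffding bound using the uniform Lipschitz/boundedness from Assumption 3(a), a union bound over a polynomial-size net of the compact BHV space, Lipschitz transfer from net points to arbitrary trees, and calibration of the threshold at $(\log k)/\sqrt{k}$ so that the super-polynomial tail beats the polynomial covering number. Your treatment is, if anything, slightly more explicit than the paper's about absorbing the discretization error and about ensuring a single constant $c_5(\delta)$ works uniformly for all $k \ge 3$.
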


\begin{proof}
Since the functions $\log \mathcal{P}^{\leaves_i}_{\tree}(\tree_i)$ is locally Lipschitz with respect to $\tree$ (Assumption $\ref{as:regular}$(a)) and number of species is finite, there exists $C_1 > 0$ such that
\[
|\log \mathcal{P}^{\leaves_i}_{\tree}(\tree_i) -  \log \mathcal{P}^{\leaves_i}_{\tree'}(\tree_i)| \le C_1 d(\tree, \tree') \h \forall i, \tree_i, \tree, \tree'.
\]
Since the tree space is compact, $C_1 d(\tree, \tree')$ is bounded by a constant $C_2$.
Using Hoeffding's inequality \citep{hoeffding1963probability}, we obtain
\[
\mathbb{P}\left[ \left|R_{k}(\tree)  - \mathbb{E}_{\mathcal{P}_{\tree^*}}\left [R_{k}(\tree) \right]   \right|  \ge \frac{x}{\sqrt{k}} \right] \le 2\exp\left(-\frac{2x^2}{kC_2^2}\right).
\]
On the other hand, we have $|R_{k}(\tree) - R_k(\tree')| \le C_1 d(\tree, \tree')$ and $|\mathbb{E}_{\mathcal{P}_{\tree^*}}\left [R_{k}(\tree) \right] - \mathbb{E}_{\mathcal{P}_{\tree^*}}\left [R_{k}(\tree') \right]| \leq C_1 d(\tree, \tree')$.
Thus, if we define the events
\[
\mc{A}(x, k, \tree)=\left\{ \left|R_{k}(\tree)  - \mathbb{E}_{\mathcal{P}_{\tree^*}}\left [R_{k}(\tree) \right]   \right|  \ge \frac{x}{2\sqrt{k}} \right\}
\]
and
\[
\mc{B}(x, k, \tree)=\left\{ \exists \tree': d(\tree, \tree') \le \frac{x}{4C_1\sqrt{k}} ~ \text{and} ~ \left|R_{k}(\tree')  - \mathbb{E}_{\mathcal{P}_{\tree^*}}\left [R_{k}(\tree') \right]   \right|  \ge \frac{x}{\sqrt{k}} \right\}
\]
then $\mc{B}(x, k, \tree) \subset \mc{A}(x, k, \tree)$.
Note that the total number of balls of radius $x/(4C_1\sqrt{k})$ required to cover the tree space is bounded above by
\[
C_n \left(\frac{4g_0C_1 \sqrt{k}}{x}\right)^{2n-3} (2n-3)!!
\]
where $g_0$ is the upper bound of branch lengths and $C_n$ is a constant depending on the number of species.
To obtain the desired inequality, we will chose $x$ such that
\[
C_n \left(\frac{4g_0C_1 \sqrt{k}}{x}\right)^{2n-3} (2n-3)!! \times 2\exp\left(-\frac{2x^2}{kC_2^2}\right) \le \delta 
\] 
which can be done with $x = C(\delta, n, C_2, g_0) \log k$.
\end{proof}

\subsection{Proof of Theorem \ref{thm:consistency}}


First, we establish that the ML estimator is consistent. 

By Lemma $\ref{lem:variance}$, we have
\[
\mc{R}(\hat \tree_k) \le  R_k(\hat \tree_k) + c_5 \frac{ \log(k)}{\sqrt{k}} \le  c_5 \frac{ \log(k)}{\sqrt{k}} 
\]
with probability at least $1-\delta$, since $\hat \tree_k$ is the maximizer of the empirical risk function. 

On the other hand, let $\mc{V}$ be a neighborhood of $\tree^*$, by Lemma \ref{prelimb},  we have
\[
\mathbb{E}_{\mathcal{P}_{\tree^*}} [R_k(\tree')] \ge c_{\mc{V}} \frac{1}{k^{1-\gamma}}\h \forall \tree' \not \in \mc{V}.
\]
Since $\gamma>1/2$, there exists $K_{\delta, \mc{V}}$ such that for $k \ge K_{\delta, \mc{V}}$, we have 
\[
c_5 \frac{ \log(k)}{\sqrt{k}}   \le c_{\mc{V}} \frac{1}{k^{1-\gamma}}
\]
We deduce that for $k \ge K_{\delta, \mc{V}}$, $\hat \tree_k \in \mc{V}$ with probability at least $1 -\delta$. 
This proves that the ML estimator is consistent. 

Next, we will derive the convergence rate of the ML supertree. 
From Lemma \ref{info}, there exist a neighborhood $\mc{U}$ of $\tree^*$ and $c_4, m>0$ such that 
\[
\mathbb{E}_{\mathcal{P}_{\tree^*}}[R_k(\tree)] = \frac{1}{k}\mathbb{E}_{\mathcal{P}_{\tree^*}}\left [\ell_{k}(\tree^*) - \ell_{k}(\tree)  \right] \ge c_4 \frac{1}{k^{1-\gamma}}\ d(\tree^*, \tree)^m \h \forall \tree \in \mc{U}.
\]
For $k \ge K_{\delta, \mc{U}}$, we have $\hat \tree_k \in \mc{U}$ with probability at least $1 -\delta$.
By Lemma \ref{lem:variance}, with probability $1 - 2\delta$, we obtain
\begin{align*}
c_4 \frac{1}{k^{1-\gamma}}\ d(\tree^*, \hat \tree_k)^m & \le \mc{R}_k(\hat \tree_k) -  R_k(\hat \tree_k)   \\
& \le c_5 \frac{ \log(k)}{\sqrt{k}}.
\end{align*}
Here, 
\[
R_k(\hat \tree_k) = \frac{1}{k}\ell_k(\tree^*) - \frac{1}{k}\ell(\hat \tree _k) \le 0
\]
because $\hat \tree_k$ is the maximizer of $R_k$. 
This completes the proof.

\section{Applications: convergence under the exponential model}

The \emph{exponential model} is a simple model of gene tree estimation errors in which the probability of observing a given tree decreases exponentially with its distance from the species tree \citep{steel2008maximum}. 
Suppose $d$ is some metric on the set of trees, in the exponential model, the probability of reconstructing any tree $\tree'$ with a leaf set $\leaves$, when $\tree^*$ is the generating tree is proportional to an exponentially decaying function of the distance from $\tree'$ to $\tree^*|_{\leaves}$:
\[
\mathcal{P}^{\leaves}_{\tree^*}(\tree') = \alpha_{\tree^*, \leaves} \exp (-\beta_{\leaves} d(\tree', \tree^* |_{\leaves})).
\] 
Here $\beta_{\leaves}$ is a constant that depends only on the set of leaves $\leaves$, while $\alpha_{\tree^*, \leaves}$ is the normalizing constant to ensure that $\mathcal{P}^{\leaves}_{\tree^*}$ is a density function.

In this section, we verify the identifiability and regularity conditions for the exponential model when $d$ is any continuous tree distance such that if $\tree$ and $\tree'$ have the same topology, then $d(\tree, \tree')$ is the Euclidean distance.

\begin{thm}
Under the exponential model with Assumptions $\ref{as:cover}$, for any $\delta>0$, $ C_{\delta} > 0$ and $K_\delta > 0$ such that for all $k \ge K_\delta$,
\[
d(\hat \tree_k , \tree^*)\leq C_{\delta} \left( \frac{\log k}{k^{(\gamma-1/2)}}\right)^{1/m}
\]
with probability at least $1-\delta$, where 
\[
m = 4\sup_{i}{|\leaves_i|} -4. 
\]
\label{thm:exp}
\end{thm}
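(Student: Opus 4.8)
The plan is to obtain Theorem~\ref{thm:exp} as a special case of Theorem~\ref{thm:consistency}: I would verify that the exponential model satisfies Assumptions~\ref{as:iden} and~\ref{as:regular}, and show that the regularity bound of Assumption~\ref{as:regular}(b) (in the per-$\leaves_i$ form of Remark~\ref{rem:reg}) holds with exponent $4|\leaves_i|-4$, so that taking the maximum over the observed leaf sets yields $m = 4\sup_i|\leaves_i|-4$. Since $\leaves^*$ is finite there are only finitely many distinct leaf sets, so every ``uniform in $\leaves$'' statement below reduces to a finite max/min and in particular $\sup_\leaves \beta_\leaves<\infty$. Fix $\leaves$ and abbreviate $a=\tree^*|_\leaves$, $b=\tree|_\leaves$, $\beta=\beta_\leaves$, $\epsilon=d(a,b)$, and $Z_c=\int \exp(-\beta\, d(\tree',c))\,d\tree'$, so that $\alpha_{\cdot,\leaves}=1/Z_{\cdot}$. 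Assumption~\ref{as:regular}(a)(i) is immediate from positivity of the exponential. For (a)(ii) I would write $\log \mathcal{P}^{\leaves}_{\tree}(\tree') = -\log Z_{b}-\beta\, d(\tree',b)$: the second term is $\beta$-Lipschitz in $b$ uniformly in $\tree'$ by the triangle inequality, the restriction map $\tree\mapsto \tree|_\leaves$ is $1$-Lipschitz, and integrating the same triangle-inequality estimate shows $Z_b$ is Lipschitz and bounded below by a positive constant, so $\log Z_b$ is locally Lipschitz in $\tree$ with a constant not depending on $\tree'$.

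The heart of the argument is a lower bound on the divergence. Setting $\Delta(\tree')=d(\tree',b)-d(\tree',a)$ and writing $\mathbb{E}_a$ for expectation under $\mathcal{P}^{\leaves}_{\tree^*}$, a direct computation using $Z_b=Z_a\,\mathbb{E}_a[e^{-\beta\Delta}]$ gives the identity
\[
\text{KL}(\mathcal{P}^{\leaves}_{\tree^*},\mathcal{P}^{\leaves}_{\tree}) = \log \mathbb{E}_a\!\left[e^{-\beta\Delta}\right] + \beta\,\mathbb{E}_a[\Delta].
\]
By the triangle inequality $|\Delta|\le\epsilon$, so for small $\epsilon$ the right-hand side is the centered log-moment-generating function of the bounded variable $-\beta\Delta$, and I would apply the elementary bound $\log\mathbb{E}[e^{X}]-\mathbb{E}[X]\ge c\,\mathrm{Var}(X)$ (valid with $c$ bounded below whenever $|X|$ is bounded). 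This reduces the task to showing $\mathrm{Var}_a(\Delta)\gtrsim \epsilon^{\,4|\leaves|-4}$.

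To bound the variance I would use a two-region argument. Let $R_1,R_2$ be the balls of radius $\epsilon/4$ about $a$ and $b$. For $\tree'\in R_1$ one has $d(\tree',a)\le\epsilon/4$ and $d(\tree',b)\ge 3\epsilon/4$, hence $\Delta\ge\epsilon/2$; symmetrically $\Delta\le-\epsilon/2$ on $R_2$. The elementary inequality $\mathrm{Var}_a(\Delta)\ge \mathcal{P}^{\leaves}_{\tree^*}(R_1)\,\mathcal{P}^{\leaves}_{\tree^*}(R_2)\,\epsilon^2$ (obtained from the sharp two-point variance bound by discarding the factor $1/(\mathcal{P}(R_1)+\mathcal{P}(R_2))\ge 1$) then applies. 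Because $\tree^*$, and therefore its restriction $a$, is resolved, for $\epsilon$ small both $a$ and the nearby point $b$ lie in the interior of a single orthant of dimension $D=2|\leaves|-3$; each $\epsilon/4$-ball there has volume of order $\epsilon^{D}$ and the density is bounded below on it, giving $\mathcal{P}^{\leaves}_{\tree^*}(R_i)\gtrsim \epsilon^{D}$. Combining yields $\mathrm{Var}_a(\Delta)\gtrsim \epsilon^{2D+2}=\epsilon^{\,4|\leaves|-4}$, which is Assumption~\ref{as:regular}(b) with $m=4|\leaves|-4$; the same construction gives $\mathrm{Var}_a(\Delta)>0$ whenever $\epsilon>0$, yielding identifiability (Assumption~\ref{as:iden}). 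Using $\epsilon<1$ near $\tree^*$ and taking the maximum exponent over the finitely many observed leaf sets makes $m=4\sup_i|\leaves_i|-4$ uniform, so Theorem~\ref{thm:consistency} delivers the stated rate.

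The step I expect to be the main obstacle is the geometric control of the probabilities $\mathcal{P}^{\leaves}_{\tree^*}(R_i)$ in BHV space: one must verify that the $\epsilon/4$-balls about $a$ and $b$ genuinely sit inside a single orthant (invoking that $\tree^*$ is resolved) and that the volume and density lower bounds hold uniformly in $\leaves$ and in the direction $a\to b$, since $d(\tree',\cdot)$ is only piecewise smooth across orthant boundaries. It is precisely the crude replacement of the sharp factor $\mathcal{P}(R_1)\mathcal{P}(R_2)/(\mathcal{P}(R_1)+\mathcal{P}(R_2))$ by the product $\mathcal{P}(R_1)\mathcal{P}(R_2)$ that inflates the exponent from the ``smooth'' value $2$ to $4|\leaves|-4$; sharpening it would require a uniform non-degeneracy (positive-definite Fisher information) estimate over all leaf sets, which this argument deliberately sidesteps in exchange for an explicit, self-contained bound.
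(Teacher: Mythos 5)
Your proposal is correct and arrives at the same exponent $m=4\sup_i|\leaves_i|-4$ by the same overall reduction (verify Assumptions~\ref{as:iden} and~\ref{as:regular} for the exponential model, then invoke Theorem~\ref{thm:consistency} via the per-$\leaves_i$ condition of Remark~\ref{rem:reg}), but the central KL lower bound is obtained by a genuinely different route. The paper applies Pinsker's inequality, $\text{KL}(\mathcal{P}^{\leaves}_{\tree^*},\mathcal{P}^{\leaves}_{\tree})\ge 2\,d_{TV}(\mathcal{P}^{\leaves}_{\tree^*},\mathcal{P}^{\leaves}_{\tree})^2$, and bounds the total variation from below by integrating the density difference over a \emph{single} ball of radius proportional to $d(\tree^*|_\leaves,\tree|_\leaves)$ centered at $\tree^*|_\leaves$, where the $\tree^*$-density dominates; this requires a case split on which normalizing constant $\alpha_{\cdot,\leaves}$ is larger. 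You instead use the exact exponential-family identity $\text{KL}=\log\mathbb{E}\left[e^{-\beta\Delta}\right]+\beta\,\mathbb{E}[\Delta]$ (expectation under $\mathcal{P}^{\leaves}_{\tree^*}$, with $\Delta(\tree')=d(\tree',\tree|_\leaves)-d(\tree',\tree^*|_\leaves)$), in which the normalizing constants cancel, and reduce everything to a variance lower bound for the bounded increment $\Delta$ via a two-ball argument; your inequality $\mathrm{Var}(\Delta)\ge \mathcal{P}(R_1)\mathcal{P}(R_2)\epsilon^2$ is valid (e.g.\ from $\mathrm{Var}(X)=\frac12\mathbb{E}[(X-X')^2]$ with an independent copy $X'$). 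Both arguments ultimately rest on the same geometric ingredient --- the volume of a ball of radius of order $\epsilon$ in the top-dimensional orthant scales like $\epsilon^{2|\leaves|-3}$, which is legitimate for small $\epsilon$ because $\tree^*$ is resolved --- and the exponent doubles for the same structural reason: in the paper the squaring comes from Pinsker, in your version from the product $\mathcal{P}(R_1)\mathcal{P}(R_2)$. What your route buys is an exact identity that eliminates the case analysis on $\alpha$ and makes transparent exactly where the bound is loose (the crude variance estimate); what the paper's route buys is brevity, needing only one ball and no moment-generating-function lemma.
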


\begin{proof}

First, we note that for all $\tree'$ with leaf set $\leaves$, $\mathcal{P}^{\leaves}_{\tree}(\tree')$ is a positive and locally Lipschitz function. 
To obtain an upper bound on the convergence rate for the ML supertree, we need to verify Assumption $\ref{as:iden}$ and Assumption $\ref{as:regular}(b)$/Remark $\ref{rem:reg}$.

\smallskip

\textbf{Identifiability}. Since it is sufficient to prove either $\text{KL}(\mathcal{P}^{\leaves}_{\tree^*}, \mathcal{P}^{\leaves}_{\tree}) > 0$ or $\text{KL}(\mathcal{P}^{\leaves}_{\tree}, \mathcal{P}^{\leaves}_{\tree^*}) > 0$, we can assume that $\alpha_{\tree^*, \leaves} \geq \alpha_{\tree, \leaves}$. 

We denote 
\[
\mc{A}_c = \{ \tree' : \beta_{\leaves} d(\tree', \tree^* |_{\leaves}) \leq  c \}
\] 
and pick $c$ small enough such that $\beta_{\leaves} d(\tree', \tree |_{\leaves}) \ge 2c$.
We note that in $A_c$,
\[
 \exp \left(-\beta_{\leaves} d(\tree', \tree^* |_{\leaves}) \right)  \ge \exp \left(-\beta_{\leaves} d(\tree', \tree |_{\leaves}) \right).
\]
Since $\alpha_{\tree^*, \leaves} \geq \alpha_{\tree, \leaves}$, we have
\begin{align*}
 \alpha_{\tree^*, \leaves} \exp \left(-\beta_{\leaves} d(\tree', \tree^* | \leaves) \right) &-  \alpha_{\tree, \leaves} \exp \left(-\beta_{\leaves} d(\tree', \tree | \leaves) \right)\\
 &\ge  \alpha_{\tree, \leaves} \left[ \exp \left(-\beta_{\leaves} d(\tree', \tree^* | \leaves) \right)  -\exp \left(-\beta_{\leaves} d(\tree', \tree^* | \leaves) \right)  \right].
\end{align*}
By Pinsker's inequality,
\begin{align*}
\text{KL}(\mathcal{P}^{\leaves}_{\tree^*}, \mathcal{P}^{\leaves}_{\tree}) &\ge 2d_{TV}(\mathcal{P}^{\leaves}_{\tree^*}, \mathcal{P}^{\leaves}_{\tree}) ^2 \\
& \geq 2  \alpha_{\tree, \leaves}^2  \left [ \int_{\mc{A}_c}{ \exp \left(-\beta_{\leaves} d(\tree', \tree^* | \leaves) \right)- \exp \left(-\beta_{\leaves} d(\tree', \tree | \leaves) \right)}  \right ]^2 \\
& \geq  2 \alpha_{\tree, \leaves}^2 \mu(\mc{A}_c)^2 (e^c - e^{2c})^2>0,
\end{align*}
which establishes the identifiability of the exponential model.
Here, $\mu$ is the Lebesgue measure.

\bigskip

\textbf{Regularity}. 
Consider a fixed tree $\tree \in \mathcal{T}$ and leaf set $\leaves$, we first assume that $\alpha_{\tree^*, \leaves} \geq \alpha_{\tree, \leaves}$.
If we define
\[
\mc{A} = \{ \tree' : \beta_{\leaves} d(\tree', \tree^* |_{\leaves}) \leq  \frac{1}{3}d(\tree |_{\leaves}, \tree^* |_{\leaves})) \},
\] 
then
\[
\beta_{\leaves} d(\tree', \tree |_{\leaves}) \ge  \frac{2}{3}d(\tree |_\leaves, \tree^* |_{\leaves}) ~~~ \forall \tree' \in \mc{A}.
\] 
Using the same argument as above, we have
\begin{align*}
d_{TV}(\mathcal{P}^{\leaves}_{\tree^*}, \mathcal{P}^{\leaves}_{\tree}) & \geq  \alpha_{\tree, \leaves_i} \mu(A) \left [e^{-\frac{1}{3} d(\tree |_\leaves, \tree^* |_{\leaves})} - e^{-\frac{2}{3}d(\tree |_\leaves, \tree^* |_{\leaves})} \right].
\end{align*}
It can be verified that there exists $C > 0$ such that for all $x \in [0, C]$, we have
\begin{equation}
e^{-x} - e^{-2x} \ge \frac{x}{2}.
\label{eq:ex}
\end{equation}
Moreover since $d$ is Euclidean inside each orthant, if $d(\tree, \tree^*)$ is small enough, we have
\begin{equation}
\mu(A) = C_{2|\leaves| - 3} \left(\frac{d(\tree |_\leaves, \tree^* |_{\leaves})}{ 3 \beta_{\leaves}}\right)^{2 |\leaves| -3}.
\label{eq:mu}
\end{equation}
Thus, let $\mc{W}$ be a neighborhood in the same topology of $\tree^*$ such that
\[
\alpha_{\tree, \leaves}  \ge \frac{1}{2} \alpha_{\tree^*, \leaves} \h \forall \tree \in \mc{W},
\]
and both Equations $\eqref{eq:ex}$ and $\eqref{eq:mu}$ are satisfied, we have
\[
\text{KL}(\mathcal{P}^{\leaves}_{\tree^*}, \mathcal{P}^{\leaves}_{\tree}) \ge C d(\tree, \tree^* |_{\leaves})^{4|\leaves| - 4} \h  \forall \tree \in \mc{W},
\]
for some constant $C>0$ independent of $\tree$. 

For the case when $\alpha_{\tree^*, \leaves} < \alpha_{\tree, \leaves}$ we define
\[
\mc{A}'= \{ \tree' : \beta_{\leaves} d(\tree', \tree^* |_{\leaves}) \leq  \frac{1}{3}d(\tree'|_{\leaves}, \tree|_{\leaves})) \},
\] 
and the argument proceeds similarly. 
This validates Assumption $\ref{as:regular}(b)$ for exponential model with $m=4n-4$. 
However, if we use the regularity condition in Remark $\ref{rem:reg}$, we can obtain the result with
\[
m = 4\sup_{i}{|\leaves_i|} -4. 
\]
\end{proof}

\section{Discussion and Conclusion}

In this paper, we propose a novel analytic approach to analyze the convergence of the ML supertree method. 
Instead of focusing on reconstructing the correct discrete topology of the species tree as in previous studies \citep[e.g.][]{roch2015robustness,steel2008maximum}, we employ a continuous model of the tree space and analyze the ML estimator on this metric space, aiming at recovering both the topology and the branch lengths of the species tree. 
This framework enables us to use tools from statistical learning and information theory to establish the convergence rate of the ML estimator and at the same time, to weaken the conditions to obtain consistency and convergence of the estimator.
Our \emph{weak covering property} is an extension of the classical covering property \citep{steel2008maximum} and provides a considerable relaxation on the sampling schemes for supertree estimation.
Our identifiability condition is also more intuitive and generalizable than the well-known basic centrality condition and does not impose constraints on the shape of the probabilistic model of gene tree estimation errors. 
Our information-theoretical approach to analyze statistical estimator on tree spaces is of independent interest and can be extended to other problems in phylogenetics.

There are several avenues for future directions for this work. 
The first direction is extending our results to other practical models of phylogenetic errors, including the multiple-coalescent model (along with a detailed model of the effects of short sequence length on the accuracy in estimating the individual gene trees).
Second, while our result provides a polynomial bound on the convergence rate, the power of the convergence (characterized by the geometric constant $m$) is not sharp. 
A sharper bound of the convergence rate would be of great interest to the field (from both theoretical and applied perspective) and would require further understanding of the tree-generating probabilistic model. 

\section*{Acknowledgement}
LSTH was supported by startup funds from Dalhousie University, the Canada Research Chairs program, the
NSERC Discovery Grant RGPIN-2018-05447, and the NSERC Discovery Launch Supplement DGECR-2018-00181.
VD was supported by a startup fund from University of Delaware and National Science Foundation grant DMS-1951474.

\bibliographystyle{chicago}
\bibliography{biblio}

\end{document}